\newtheorem{theorem}{Theorem}[section]
\newtheorem{lemma}[theorem]{Lemma}
\newcommand {\coloring}{\chi}
\title{New Hardness Results in Rainbow Connectivity}
\author[P. Ananth]{Prabhanjan Ananth}
\thanks{} 
\address{Dept. of Computer Science and Automation, Indian Institute of
Science}
\email{prabhanjan@csa.iisc.ernet.in}
\author[M. Nasre]{Meghana Nasre}
\thanks{} 
\address{Dept. of Computer Science and Automation, Indian Institute of
Science}
\email{meghana@csa.iisc.ernet.in}
\date{}
\begin{document}

\maketitle

\begin{abstract}
A path in an edge colored graph is said to be a rainbow path 
if no two edges on the path have the same color. An edge colored 
graph is (strongly) rainbow connected if there exists a (geodesic) 
rainbow path between every pair of vertices. The (strong) 
rainbow connectivity of a graph $G$, denoted by ($src(G)$, respectively) 
$rc(G)$ is the smallest number of colors required to edge color 
the graph such that the graph is (strong) rainbow connected. 
It is known that for \emph{even} $k$ to decide whether the 
rainbow connectivity of a graph is at most $k$ or not is NP-hard. 
It was conjectured that for all $k$, to decide whether $rc(G) \leq k$ 
is NP-hard. In this paper we prove this conjecture. We also 
show that it is NP-hard to decide whether $src(G) \leq k$ or 
not even when $G$ is a bipartite graph.
\end{abstract}


\section{Introduction}
This paper deals with the notion of {\em rainbow coloring} and {\em strong rainbow coloring}
of a graph. Consider an edge coloring (not necessarily proper) of a graph $G = (V, E)$. We say that
there exists a {\em rainbow path} between a pair of vertices, if no two edges on the path
have the same color. A coloring of the edges in graph $G$ is called a rainbow coloring if between every
pair of vertices in $G$ there exists a rainbow path. An edge coloring is a {\em strong}
rainbow coloring if between every pair of vertices, one of its geodesic \textit{i.e.,} shortest paths is a rainbow
path. The minimum number of colors required to rainbow color a graph $G$ is called the rainbow
connection number denoted by $rc(G)$.  Similarly, the minimum number of colors required
to strongly rainbow color a graph $G$ is called the strong rainbow connection
number, denoted by $src(G)$. 
The notions of rainbow coloring and strong rainbow coloring were introduced recently
by Chartrand et al.~\cite{chartrand2008rainbow} as a means of strengthening 
the connectivity. Subsequent to this paper, the problem has received attention by
several people and the complexity as well as upper bounds for the rainbow connection number
have been studied.

In this paper we study the complexity of computing $rc(G)$ and $src(G)$. We prove the following results:
\begin{enumerate}
\item For every $k \ge 3$, deciding whether $src(G) \leq k$, is NP-hard even when $G$ is bipartite.
\item Deciding rainbow connection number of a graph is at most 3 is NP-hard even when the graph $G$ is bipartite.
\item For every $k \ge 3$, deciding whether $rc(G) \leq k$ is NP-hard.
\end{enumerate}

We note that Chakraborty et al.~\cite{chakraborty2008hardness} proved that for all even $k$,
deciding whether $rc(G) \leq k$ is NP-hard. For proving the result they introduced
the problem of {\em subset} rainbow connectivity where in addition to the 
graph $G = (V, E)$ we are given a set $P$ containing pairs of vertices. The goal is to answer
whether there exists an edge coloring of $G$ with $k$ colors such that every
pair in $P$ has a rainbow path. We also use the subset rainbow connectivity problem
and analogously define the subset strong rainbow connectivity problem to prove our 
hardness results.

\subsection{Related work}
The concepts of rainbow connectivity and strong rainbow connectivity 
were first introduced by Chartrand et al. in \cite{chartrand2008rainbow}. 
There they computed the rainbow connection number and strong rainbow connection
number for several graphs including the complete bipartite graph and multipartite graphs.
Subsequent to this upper bounds for the rainbow connectivity as a function of 
minimum degree and the number of vertices of the graph were explored 
in \cite{caro2008rainbow},\cite{Krivelevich:2010:RCG:1753069.1753071},\cite{chandran2010rainbow}. 
Graphs of diameter 2 were studied in \cite{li2011rainbow} and it was shown 
that $rc(G)$ is upper bounded by $k+2$ where $k$ is the number of bridges in the graph. 
Some upper bounds were shown in \cite{chandran2010rainbow} 
for special graphs like interval graphs, AT-free graphs. 
The threshold function for random graph to have $rc(G)=2$ was studied in \cite{caro2008rainbow}. 
In \cite{Krivelevich:2010:RCG:1753069.1753071}, the concept of rainbow vertex 
connection number was studied. In \cite{basavaraju2010rainbow}, 
Basavaraju et al. gave a constructive argument to show that any 
graph $G$ can be colored with $r(r+2)$ colors in polynomial time 
where $r$ is the radius of the graph.
In \cite{chakraborty2008hardness}, it was shown that the following 
problem is NP-hard: Given a graph $G$ and an even number $k (>0)$, is $rc(G) \leq k$? 
It was conjectured that it is NP-hard to determine whether $rc(G) \leq k$ for all $k>0$. 
Similar complexity results were shown for the rainbow vertex connection number in \cite{chen2011complexity}.     


\subsection{Organisation of the paper}
We prove complexity results related to strong rainbow connectivity in Section \ref{src}. In Section \ref{rc}, we give the complexity results related to the rainbow connectivity.


\section{Strong rainbow connectivity}
\label{src}
In this section we prove the hardness result for the following problem: given a graph $G$ 
and an integer $k \ge 3$, decide whether $src(G) \leq k$. In order to show the hardness of this problem, we will
first consider an intermediate problem called the $k$-subset strong rainbow connectivity problem which is the decision version of the subset strong rainbow connectivity problem. The input to the $k$-subset strong rainbow connectivity problem is a graph $G$ along with a set of pairs
$P = \{(u, v): (u,v) \subseteq V \times V\}$ and an integer $k$. Our goal is to answer whether there exists
an edge coloring of $G$ with at most $k$ colors such that every pair $(u,v) \in P$ has a geodesic rainbow path.

Our overall plan is to prove that $k$-subset strong rainbow connectivity is NP-hard by showing a reduction
from the vertex coloring problem. We then establish the polynomial time equivalence of the $k$-subset strong rainbow connectivity problem
and the problem of determining whether $src(G) \leq k$ for a graph $G$. 


\subsection{$k$-subset strong rainbow connectivity}
Let $G = (V, E)$ be an instance of the $k$-vertex coloring problem. We say that $G$ can be vertex colored 
using $k$ colors if there exists an assignment of at most $k$ colors to the vertices of $G$
such that no pair of adjacent vertices are colored using the same color. This problem is NP-hard for $k \ge 3$.
Given an instance $G = (V, E)$ of the $k$-vertex coloring problem, we construct
an instance $\langle G' = (V', E'), P\rangle$ of the $k$-subset strong rainbow connectivity problem.

The graph $G'$ that we construct is a star, with one leaf vertex corresponding
to every vertex $v \in V$ and an additional central vertex $a$.
The set of pairs $P$ captures the edges in $E$, that is, for every edge $(u,v) \in E$ 
we have a pair $(u,v)$ in the set $P$. The goal is to color the edges of $G'$
using at most $k$ colors such that every pair in the set $P$ has a geodesic rainbow path.
More formally, we define each of the parameters $\langle G' = (V', E'), P \rangle$ of
the $k$-subset strong rainbow connectivity problem below:
\begin{eqnarray*}
V' = \{a\} \cup V; \hspace{0.2in}
E' = \{(a, v): v \in V\} \\
P  = \{(u, v): (u, v) \in E\}; \hspace{0.2in}
\end{eqnarray*}
We now prove the following lemma which establishes the hardness of the $k$-subset strong rainbow connectivity problem.

\begin{lemma}
\label{lem:vc-subset-equiv}
The graph $G = (V, E)$ is vertex colorable using $k ( \ge 3)$ colors iff the graph $G' = (V', E')$ can be
colored using $k$ colors such that for every pair $(u,v) \in P$ there is a geodesic rainbow path
between $u$ and $v$.
\end{lemma}
\begin{proof}
Assume that $G$ can be vertex colored using $k$ colors; we now show an assignment of colors to the edges
of the graph $G'$. Let $c$ be the color assigned to a vertex $v \in V$; we assign the color $c$ to
the edge $(a, v) \in E'$. Now consider any pair $(u,v)  \in P$. Recall that $(u,v) \in P$ because
there exists an edge $(u,v) \in E$. Since the coloring was a proper vertex coloring, the edges $(a, u)$
and $(a, v)$ in $G'$ are assigned different colors by our coloring. Thus, the path $\{(u,a), (a,v)\}$ is a rainbow
path; further since that is the only path between $u$ and $v$ it is also a strong rainbow path.

To prove the other direction, assume that there exists an edge coloring of $G'$ using $k$ colors
such that between every pair of vertices in $P$ there is a rainbow path. It is easy to see that
if we assign the color $c$ of the edge $(a, v) \in E'$ to the vertex $v \in V$, we get a coloring
that is a proper vertex coloring for $G$.
\qed
\end{proof}
Recall the problem of subset rainbow connectivity where instead of asking for a
geodesic rainbow path between every pair in $P$, we are content with any rainbow
path. Note that our graph $G'$ construed in the above reduction is a tree, in fact
a star and hence between every pair of vertices there is exactly one path. Thus,
all the above arguments apply for the $k$-subset rainbow connectivity problem as well.
As a consequence we can conclude the following:

\begin{lemma}
\label{lem:hardness-src}
For every $k \ge 3$, $k$-subset strong rainbow connectivity and
$k$-subset rainbow connectivity is NP-hard even when the input graph $G$ is a star.
\end{lemma}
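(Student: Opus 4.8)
The plan is to obtain Lemma~\ref{lem:hardness-src} as an immediate consequence of Lemma~\ref{lem:vc-subset-equiv} together with the observation that $G \mapsto \langle G', P\rangle$ is a polynomial-time reduction. First I would check the resource bounds of the construction: given $G = (V, E)$, the star $G'$ has exactly $|V| + 1$ vertices and $|V|$ edges, and the set $P$ has exactly $|E|$ pairs, so $\langle G', P\rangle$ can be produced from $G$ in time polynomial in $|V| + |E|$. Combined with Lemma~\ref{lem:vc-subset-equiv}, which says that $G$ is $k$-vertex colorable iff $G'$ admits a $k$-edge-coloring with a geodesic rainbow path for every pair of $P$, this is precisely a polynomial-time many-one reduction from $k$-vertex coloring to $k$-subset strong rainbow connectivity whose output graph is always a star. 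Since $k$-vertex coloring is NP-hard for every $k \ge 3$, NP-hardness of $k$-subset strong rainbow connectivity restricted to stars follows.

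For the $k$-subset rainbow connectivity problem I would reuse the same reduction, and the only thing to verify is that on these instances the requirement of a \emph{geodesic} rainbow path and the requirement of \emph{any} rainbow path coincide. This holds because $G'$ is a star, hence a tree, so between any two vertices of $V'$ there is exactly one path, and that unique path is trivially a shortest path. Consequently an edge coloring of $G'$ provides a rainbow path for a pair $(u, v) \in P$ iff it provides a geodesic rainbow path for $(u, v)$, so both directions of Lemma~\ref{lem:vc-subset-equiv} transfer verbatim to the non-geodesic variant. This gives a polynomial-time reduction from $k$-vertex coloring to $k$-subset rainbow connectivity on stars, and NP-hardness follows exactly as above.

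I do not expect a genuine obstacle here: the mathematical content is already carried by Lemma~\ref{lem:vc-subset-equiv}, and Lemma~\ref{lem:hardness-src} is essentially a bookkeeping statement recording (i) that the reduction is polynomial and (ii) that the star (tree) structure collapses the geodesic and non-geodesic variants onto the same instance. If one additionally wanted membership in NP, a polynomial-size certificate is the edge coloring together with, for each pair in $P$, the promised (geodesic) rainbow path, all checkable in polynomial time; but this is not needed for the stated hardness claim.
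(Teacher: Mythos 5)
Your proposal is correct and matches the paper's own argument: the paper likewise derives Lemma~\ref{lem:hardness-src} directly from the reduction of Lemma~\ref{lem:vc-subset-equiv} (NP-hard $k$-vertex coloring, polynomial-time construction of the star instance) together with the observation that $G'$ is a star, so each pair has a unique path and the geodesic and non-geodesic variants coincide.
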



\noindent From Lemma \ref{lem:vc-subset-equiv}, 
it can be observed that if subset (strong) rainbow connectivity 
problem had a $f(n)$-factor approximation algorithm then even the 
classical vertex coloring has a $f(n)$-approximation algorithm. 
But we know from \cite{feige1996zero}, that there is no 
$n^{1-\epsilon}$ factor approximation algorithm for vertex 
coloring unless NP=ZPP. Hence subset (strong) rainbow connectivity 
problem is inapproximable within a factor of $n^{1-\epsilon}$ unless NP=ZPP.


\subsection{Hardness of $k$-strong rainbow connectivity}
In this section we prove that the problem of deciding whether a graph $G$ can be strongly
rainbow colored using $k$ colors is polynomial time equivalent to the $k$-subset strong rainbow connectivity problem \textit{i.e.,} the existence of a polynomial time algorithm for one of them will imply the existence of polynomial time algorithm for the other. In particular we prove the lemma.

\begin{lemma}
\label{lem:subset-src-equiv}
The following two problems are reducible to each other in polynomial time:\\
(1) Given a graph $G = (V, E)$ and an integer $k$, decide whether the edges of $G$ 
can be colored using $k$ colors such that between every pair of vertices in $G$
there is a geodesic rainbow path. \\
(2) Given a graph $G = (V, E)$, an integer $k$ and a set of pairs $P \subseteq V \times V$,
decide whether the edges of $G$ can be colored using $k$ colors such that between every pair
$(u,v) \in P$ there is a geodesic rainbow path.
\end{lemma}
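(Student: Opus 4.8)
The plan is to show both directions of the polynomial-time reduction, noting that the direction from (1) to (2) is trivial and the substance lies in reducing (2) to (1).

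\textbf{The easy direction: (1) reduces to (2).} Given an instance $G$ and $k$ of problem (1), simply set $P = V \times V$ (all pairs). Then a $k$-coloring of $G$ gives a geodesic rainbow path between every pair in $P$ precisely when it gives one between every pair of vertices of $G$. This is an identity-on-the-graph reduction and runs in polynomial time.

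\textbf{The hard direction: (2) reduces to (1).} Given an instance $\langle G = (V,E), k, P \rangle$ of problem (2), I would build an augmented graph $G''$ by attaching gadgets that (a) force the pairs in $P$ to still need a rainbow geodesic in $G''$ (so that these shortest paths are essentially the old ones inside $G$), and (b) make every pair \emph{not} in $P$ trivially satisfiable — ideally by giving such pairs a very short connection through newly added vertices, so their geodesic is a single edge or a two-edge path that can be rainbow-colored for free using one or two reserved fresh colors. The standard trick (as in Chakraborty et al.) is: for each vertex $v \in V$ add a pendant-type attachment, and add a global apex-like structure connected so that any two original vertices $u,v$ with $(u,v) \notin P$ acquire a new short path through the added structure of length strictly less than (or incomparable with, but safely colorable) the relevant distances, while for $(u,v) \in P$ the distance in $G''$ equals $d_G(u,v)$ and the only geodesics are those living inside $G$. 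Then I would argue $src(G'') \le k'$ for an appropriate $k' = k + O(1)$ iff the instance of (2) is a yes-instance: the extra $O(1)$ colors are dedicated to the gadget edges and never interfere with the coloring of $E$, the coloring of $E$ restricted from any valid $G''$-coloring witnesses problem (2), and conversely any valid coloring for (2) extends to $G''$ by coloring gadget edges with the reserved colors.

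\textbf{Main obstacle.} The delicate point is controlling \emph{distances} in $G''$: I must ensure that adding edges to kill the non-$P$ pairs does not create new short paths between $P$-pairs that would change their geodesics (and hence change which colorings are valid), and simultaneously that the gadget edges can always be colored with the few reserved colors without ever being forced onto a geodesic of a $P$-pair. Verifying that every relevant shortest path either lies entirely in the original $G$ (for $P$-pairs) or lies entirely in the gadget using reserved colors (for non-$P$-pairs and for pairs involving new vertices) is the crux; once the distance structure is pinned down, the equivalence of the two colorings follows by restriction and extension as sketched. I would carry this out by first fixing the gadget, then computing all pairwise distances in $G''$ by cases, then proving the two implications of the "iff".
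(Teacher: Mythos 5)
Your proposal has a genuine gap: the reduction from (2) to (1) is only a plan, not a proof. You never fix the gadget, never compute the distances, and you yourself identify the crux (ensuring that every geodesic of a $P$-pair stays inside $G$ while every other pair gets a rainbow geodesic using gadget edges) as something you ``would'' verify later. That crux is exactly where the work lies, and in the generality you attempt it is a real obstruction: if you give every non-$P$ pair a short connection through new vertices, then two such shortcuts can concatenate (e.g.\ $u\text{--}x\text{--}w\text{--}y\text{--}v$ with $(u,w),(w,v)\notin P$ but $(u,v)\in P$) and create a path between a $P$-pair that is shorter than its distance in $G$, destroying property (a). So as written the argument does not go through for an arbitrary instance of (2).

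The paper sidesteps this entirely by first invoking Lemma~\ref{lem:hardness-src}: it suffices to reduce the instances on which (2) is already known to be NP-hard, namely stars with all pairs being leaf--leaf pairs. On a star every relevant distance is $2$, the unique $2$-path between a $P$-pair is through the centre $a$ (so its geodesic is forced to be the original path, with no distance analysis needed), and the added vertices $u_i, u_i', w_{i,j}, w_{i,j}'$ can be rainbow-connected to everything by reusing just three of the existing colors $c_1,c_2,c_3$ --- no fresh ``reserved'' colors. This last point also matters downstream: your $k'=k+O(1)$ scheme would still formally satisfy the lemma (where $k$ is part of the input), but it would not yield the paper's theorem that $src(G)\le k$ is NP-hard for \emph{every} fixed $k\ge 3$, since the constant shift loses the small values of $k$; the paper's construction preserves $k$ exactly (and, incidentally, produces a bipartite $G'$, which is needed for the bipartite strengthening). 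Your direction (1) to (2) via $P=V\times V$ is fine but is not where the content is. To repair the proposal, either supply and analyze a concrete gadget for general $G$ (and deal with the compound-shortcut problem), or do as the paper does and exploit the star-restricted hardness before building the gadget.
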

\begin{proof}
It suffices to prove that problem (2) reduces to problem (1). Let $\langle G = (V, E), P\rangle$ 
be an instance of the $k$-subset strong rainbow connectivity problem. Using Lemma~\ref{lem:hardness-src},
 we know that
$k$-subset strong rainbow connectivity is NP-hard even when $G$ is a star as well
as the pairs $(u,v) \in P$ are such that both $u$ and $v$ are leaf nodes of the star.
We assume both these properties on the input $\langle G, P\rangle$ and use it crucially in our reduction.
Let us denote the central vertex of the star $G$ by $a$ and the leaf vertices by $L=\{v_1, \ldots ,v_n\}$, that is,
$V = \{a\} \cup L$. Using the graph $G$ and the pairs $P$,
we construct the new graph $G' = (V', E')$ as follows: for every leaf node $v_i \in L$, we introduce two new
vertices $u_i$ and $u_i'$. For every pair of leaf nodes $(v_i, v_j) \in (L \times L) \setminus P$, we introduce two new vertices 
$w_{i,j}$ and $w_{i,j}'$. 

\begin{eqnarray*}
V' &=& V \cup V_1 \cup V_2 \\
V_1 &=& \{u_i: i \in \{1, \ldots, n\} \} \cup \{w_{i,j}: (v_i,v_j) \in (L \times L) \setminus  P\} \ \ \ \ \ \ \ \ \ \ \ \ \ \ \ \ \ \ \ \ \ \ \  \\
V_2 &=& \{u'_i: i \in \{1, \ldots, n\} \} \cup \{w_{i,j}': (v_i,v_j) \in (L \times L) \setminus  P\}
\end{eqnarray*}
The edge set  $E'$ is be defined as follows:
\begin{eqnarray*}
E' &=& E \cup E_1 \cup E_2 \cup E_3\\
E_1 &=& \{ (v_i, u_i): v_i \in L, u_i \in V_1\} \cup \{(v_i, w_{i,j}), (v_j, w_{i,j}):  (v_i,v_j) \in (L \times L) \setminus  P\} \\
E_2 &=& \{ (x, x'): x \in V_1, x' \in V_2\} \\
E_3 &=& \{ (a, x'): x' \in V_2 \}
\end{eqnarray*}
We now prove that $G'$ is strongly rainbow colorable using $k$ colors iff there is an edge coloring of $G$ (using $k$ colors)
such that every pair $(u,v) \in P$ has a strong rainbow path. To prove one direction, we first note that,
for all pairs $(u,v) \in P$, there is a two length path $(u, a, v)$ in $G$ and this path is also present in $G'$.
Further, this path is the only two length path in $G'$ between $u$ and $v$; hence any strong rainbow
coloring of $G'$ using $k$ colors must make this path a rainbow path. This implies that if $G$ cannot be edge
colored with $k$ colors such that every pair in $P$ is strongly rainbow connected, the graph $G'$ cannot be strongly
rainbow colored using $k$ colors.

To prove the other direction, assume that there is a coloring $\coloring: E \rightarrow \{c_1, c_2, \ldots, c_k\}$ 
of $G$ such that all pairs
in $P$ are strongly rainbow connected. We extend this coloring to a strong rainbow coloring of $G'$
as follows:
\begin{itemize}
\item We retain the color on the edges of $G$, i.e. $\coloring'(e) = \coloring(e): e \in E$. 
\item For each edge $(v_i, u_i) \in E_1$, we set $\coloring'(v_i, u_i) = c_3$.
\item For each pair of edges  $\{(v_i, w_{i,j}), (v_j, w_{i,j})\} \in E_1$, we set $\coloring'(v_i, w_{i,j}) = c_1$,
$\coloring'(v_i, w_{i,j}) = c_2$ (Assume without loss of generality that $i < j$). 
\item The edges in $E_2$ form a complete bipartite graph between the vertices in $V_1$ and $V_2$.
To color these edges, we  pick a perfect matching $M$ of size $|V_1|$ and assign $\coloring'(e) = c_1, \forall e \in E_2 \cap M$
and $\coloring'(e) = c_2, \forall e \in E_2 \setminus M$.
\item Finally, for each edge $(a, x') \in E_3$, we set $\coloring'(a, x') = c_3$.
\end{itemize}
It is straightforward to verify that this coloring is indeed a strong rainbow coloring of
the graph $G'$ using no more than $k$ colors. This finishes the proof.
\qed 
\end{proof}

We note that the graph $G'= (V', E')$ constructed in the proof of Lemma~\ref{lem:subset-src-equiv} is in fact
bipartite. The vertex set $V'$ can be partitioned into two sets $A$ and $B$, where $A = \{a\} \cup V_1$
and $B = L \cup V_2$ such that there are not edges between vertices in the same partition.
Thus, using Lemma~\ref{lem:hardness-src} and Lemma~\ref{lem:subset-src-equiv}, 
we conclude the following theorem.

\begin{theorem}
For every $k \ge 3$, deciding whether a given graph $G$ can be strongly rainbow colored using $k$ 
colors is NP-hard. Further, the hardness holds even when the graph $G$ is bipartite.
\end{theorem}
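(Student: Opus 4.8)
The plan is to assemble the theorem directly from the two lemmas already established, so the proof is essentially a bookkeeping argument rather than a new construction. First I would invoke Lemma~\ref{lem:hardness-src}, which tells us that $k$-subset strong rainbow connectivity is NP-hard for every $k \ge 3$ even when the input graph is a star and the pairs in $P$ consist only of leaf vertices. This gives us a hard instance $\langle G, P \rangle$ of exactly the restricted form that the reduction in Lemma~\ref{lem:subset-src-equiv} requires as input.

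Next I would apply the reduction of Lemma~\ref{lem:subset-src-equiv} to this instance. That lemma shows problem (2), the subset version, reduces in polynomial time to problem (1), deciding whether $G'$ can be strongly rainbow colored with $k$ colors, by constructing the graph $G' = (V', E')$. Composing the two results immediately yields that deciding $src(G') \le k$ is NP-hard for every $k \ge 3$. The only remaining thing to observe is the bipartiteness claim: the graph $G'$ produced by the reduction admits the partition $A = \{a\} \cup V_1$ and $B = L \cup V_2$. One checks that every edge of $E' = E \cup E_1 \cup E_2 \cup E_3$ crosses this partition --- edges of $E$ join $a$ to a leaf in $L$, edges of $E_1$ join a leaf in $L$ to a vertex of $V_1$, edges of $E_2$ join $V_1$ to $V_2$, and edges of $E_3$ join $a$ to $V_2$ --- so no edge lies within $A$ or within $B$. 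Hence $G'$ is bipartite, and the hardness persists on bipartite graphs.

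I do not anticipate a genuine obstacle here, since all the real work is done in the preceding lemmas; the one point that needs a moment's care is verifying the bipartition, in particular making sure that the matching-based coloring of the complete bipartite graph between $V_1$ and $V_2$ used in the proof of Lemma~\ref{lem:subset-src-equiv} does not secretly introduce an edge inside a part (it does not, since $M \subseteq E_2$). Given that, the theorem follows by direct composition of Lemmas~\ref{lem:hardness-src} and~\ref{lem:subset-src-equiv} together with the structural observation about $G'$.
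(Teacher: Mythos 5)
Your proposal is correct and follows exactly the paper's own route: compose Lemma~\ref{lem:hardness-src} with the reduction of Lemma~\ref{lem:subset-src-equiv}, and observe that $G'$ is bipartite via the partition $A = \{a\} \cup V_1$, $B = L \cup V_2$. The edge-by-edge check of the bipartition matches the paper's remark, so nothing is missing.
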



From the same construction, it can be proved that deciding whether a given graph $G$ can be rainbow colored using at most 3 colors is NP-hard.


\section {Rainbow connection number}
\label{rc}
In this section we investigate the complexity of deciding whether the rainbow connection
number of a graph is less than or equal to some 
natural number $k$. We prove the following problem is NP-hard: given a graph $G$
and an integer $k$ decide whether $rc(G) \leq  k$. 
We recall from Lemma~\ref{lem:hardness-src} that the $k$-subset rainbow connectivity problem
is NP-hard.
We reduce from the $k$-subset rainbow connectivity to our problem of deciding
whether a given graph can be rainbow colored using at most $k$ colors.

\subsection{Hardness of $k$-rainbow connectivity}
We first describe the outline of our reduction.
Let $\langle G=(V,E), P \rangle$ be 
the input to the $k$-subset rainbow connectivity problem where $P \subseteq V \times V$. We construct 
a graph $G'=(V',E')$ such that $G$ is a subgraph of $G'$ and 
$rc(G') \leq k$ iff $G$ is $k$-subset rainbow connected.
To construct $G'$, we first 
construct a graph $H_k=(W_k,E_k)$ such that $V \subset W_k$ and $V'=W_k$. Corresponding to the set $P$ of pairs of vertices we 
associate a set of pairs of vertices $P_k$ with respect to the graph $H_k$ (The set $P_k$ is only a relabelling of pairs of vertices in the set $P$). 
Then, the edge disjoint union of the two graphs $G$ and 
$H_k$ will yield the graph $G'$. The graph $G'$ is constructed such that it satisfies two properties:
\begin{enumerate}
\item There exists an edge coloring $\chi$ of $H_k$ with 
$k$ colors such that all pairs of vertices in $G'$ except those 
in $P$ \textit{i.e.,} all pairs in $(V' \times V') \setminus P$ 
are rainbow connected. Thus, if $G$ is $k$-subset rainbow connected 
then there exists $k$-edge coloring of $G$ such that all pairs in 
$P$ will be rainbow connected. From this we prove that $G'$ 
can be rainbow colored using $k$ colors if $G$ is $k$-subset rainbow connected.
\item All paths of length $k$ or less between any pair of vertices in $P$ are contained entirely in $G$ (as a subgraph of $G'$) itself. This ensures that for a rainbow coloring of $G'$ with $k$ colors, any pair of vertices in $P$ should have all its rainbow paths inside $G$ itself. Hence, if $G'$ can be rainbow colored with $k$ colors then $G$ can be edge colored with $k$ colors such that it is subset rainbow connected with respect to $P$.
\end{enumerate}  
We construct the family of graphs $H_k$ inductively. For the base cases $k = 2$ and $k = 3$
we give explicit constructions and prove the properties mentioned above are true for the base
cases. We then show our inductive step and prove that the properties hold. Finally, we describe our
graph $G'$ and prove the correctness of the reduction.



\subsubsection{Construction of $H_2$:}
The construction of the graph $H_2$ is derived from the reduction of
Chakraborty et al.~\cite{chakraborty2008hardness} used to prove that the 2-subset rainbow
connectivity problem is NP-hard.
Let $H_2=(W_2,E_2)$ where the vertex set $W_2$ is defined as follows:
\begin{eqnarray*}
W_2 &=& W_2^{(1)} \cup W_2^{(2)} \\
W_2^{(1)} &=& \{v_{i,2} : i \in \{1, \ldots ,n\} \} \\
W_2^{(2)} &=& \{u_i : i \in \{1, \ldots ,n\}\} \cup \{w_{i,j} : (v_i,v_j) \in (V \times V) \setminus P\} \\
\end{eqnarray*}
The edge set $E_2$ is defined as:
\begin{eqnarray*}
E_2 &=& E_2^{(1)} \cup E_2^{(2)} \cup E_2^{(3)} \\
E_2^{(1)} &=& \{(v_{i,2},u_{i}) : i \in \{1, \ldots ,n\} \} \\
E_2^{(2)} &=& \{(v_{i,2},w_{i,j}),(v_{j,2},w_{i,j}) : (v_i,v_j) \in (V \times V) \setminus P\} \ \ \ \ \   \\
E_2^{(3)} &=& \{(x,y) : x,y \in W_2^{(2)}\}
\end{eqnarray*}

The set of vertices in $W_2^{(1)}$ are referred to as base vertices of $H_2$. 
Let $P_2=\{(v_{i,2},v_{j,2}) : (v_i,v_j) \in P\}$. The graph $H_2$ has 
the property that for all $(v_{i,2},v_{j,2}) \in P_2$ 
there is no path of length $\leq$ 2 between $v_{i,2}$ and $v_{j,2}$. 
Also, if $(v_{i,2},v_{j,2}) \notin P_2$ the shortest path between $v_{(i,2)}$ and $v_{(j,2)}$ is of length 2. 


Let the edge coloring $\chi:E_2 \rightarrow \{c_1,c_2\}$ of $H_2$ be defined as follows.
\begin{itemize}
\item If $e \in E_2^{(1)}$ then $\chi(e)=c_2$.
\item For each pair of edges  $\{(v_{i,2}, w_{i,j}), (v_{j,2}, w_{i,j})\} \in E_2^{(2)}$, 
we set $\chi((v_{i,2}, w_{i,j}))=c_1$ and $\chi((v_{j,2}, w_{i,j}))=c_2$ assuming without loss of generality that $i < j$.
\item If $e \in E_2^{(3)}$ then $\chi(e)=c_1$.
\end{itemize}
It can be verified that for every pair of 
vertices $(x,y)$ in $(W_2 \times W_2) \setminus P_2$, there is a rainbow path between $x$ and $y$.\\


\subsubsection{Construction of $H_3$:}
We now describe the construction of the graph $H_3$. Let $H_3=(W_3,E_3)$ be a graph where the
vertex set $W_3$ is defined as follows:
\begin{eqnarray*}
W_3 &=& W_3^{(1)} \cup W_3^{(2)} \cup W_3^{(3)} \\
W_3^{(1)} &=& \{v_{i,3} : i \in \{1, \ldots ,n\} \} \\
W_3^{(2)} &=& \{u_i : v_i \in V \} \cup \{a_{i,j},b_{i,j} : (v_i,v_j) \in (V \times V) \setminus  P\} \\ 
W_3^{(3)} &=& \{u_i' : v_i \in V \} \cup \{a_{i,j}',b_{i,j}' : (v_i,v_j) \in (V \times V) \setminus  P\}
\end{eqnarray*}
The edge set $E_3$ is defined as:
\begin{eqnarray*}
E_3 &=& E_3^{(1)} \cup E_3^{(2)} \cup E_3^{(3)} \cup E_3^{(4)} \\
E_3^{(1)} &=& \{ (v_{i,3}, u_i): v_i \in V\} \\
E_3^{(2)} &=& \{(v_{i,3}, a_{i,j}), (v_{j,3}, b_{i,j}) : (v_i,v_j) \in (V \times V) \setminus  P\} \\
E_3^{(3)} &=& \{ (x, x') :  x \in W_3^{(2)}, x' \in W_3^{(3)}\} \\
E_3^{(4)} &=& \{ (a_{i,j},b_{i,j}) : (v_i,v_j) \in (V \times V) \setminus  P \}
\end{eqnarray*}

The set of vertices in $W_3^{(1)}$ are referred to as base vertices of $H_3$. 
Let $P_3=\{(v_{i,3},v_{j,3}) : (v_i,v_j) \in P\}$. The graph $H_3$ has 
the property that for all $(v_{i,3},v_{j,3}) \in P_3$ there is no path 
of length $\leq$ 3 between $v_{i,3}$ and $v_{j,3}$. Also, if 
$(v_{i,3},v_{j,3}) \notin P_3$ then the shortest path between $v_{i,3}$ and $v_{j,3}$ is of length 3. \\
Define an edge coloring $\chi:E_3 \rightarrow \{c_1,c_2,c_3\}$ of $H_3$ as follows.
\begin{itemize}
\item For each edge $e=(v_{i,3}, u_i) \in E_3^{(1)}$, we set $\coloring(e) = c_3$.
\item For each pair of edges  $\{(v_{i,3}, a_{i,j}), (v_{j,3}, b_{i,j})\} \in E_3^{(2)}$, 
we set $\coloring((v_{i,3}, a_{i,j})) = c_1$,
$\coloring((v_{j,3}, b_{i,j})) = c_2$ assuming without loss of generality that $i < j$.
\item The edges in $E_3^{(3)}$ form a complete bipartite graph between the vertices in $W_3^{(2)}$ and $W_3^{(3)}$.
To color these edges, we  pick a perfect matching $M$ of size $|W_3^{(2)}|$ and assign $\coloring(e) = c_1, \forall e \in E_3^{(3)} \cap M$
and $\coloring(e) = c_2, \forall e \in E_3^{(3)} \setminus M$.
\item For an edge $e \in E_3^{(4)}$, assign $\chi(e)=c_3$.
\end{itemize}
It can be verified that for every pair of vertices $(x,y)$ in $(W_3 \times W_3) \setminus P$, 
there is a rainbow path between $x$ and $y$.

\subsubsection{Inductive step:}
Assume we have constructed $H_{l-2}=(W_{l-2},E_{l-2})$. 
From $H_{l-2}$ we first construct an intermediate 
graph $\mathcal{H}^{'}_{l-2}$ from which we construct $H_l$. 
The graph $\mathcal{H}^{'}_{l-2}$ is constructed as follows.\\
Construction of $\mathcal{H}^{'}_{l-2}$: Each base vertex $v_{i,l-2}$ in $W_{l-2}$ is split into three vertices $v_{i,l-2}^{(1)}$, $v_{i,l-2}^{(2)}$, $v_{i,l-2}^{(3)}$ and edges are added between them \textit{i.e.,} the vertices $v_{i,l-2}^{(1)}$, $v_{i,l-2}^{(2)}$, $v_{i,l-2}^{(3)}$ form a triangle. Any edge of the form $(w,v_{i,l-2})$ is replaced by three edges $(w,v_{i,l-2}^{(1)}),(w,v_{i,l-2}^{(2)}),(w,v_{i,l-2}^{(3)})$. Formally the graph $\mathcal{H}^{'}_{l-2}=(\mathcal{W}^{'}_{l-2},\mathcal{E}^{'}_{l-2})$ is defined as follows. The set of vertices is $\mathcal{W}^{'}_{l-2}=\mathcal{W}_{l-2}^{(1)} \cup \mathcal{W}_{l-2}^{(2)}$ where,
\begin{eqnarray*}
\mathcal{W}_{l-2}^{(1)} &=& \{ v_{i,l-2}^{(1)}, v_{i,l-2}^{(2)}, v_{i,l-2}^{(3)}\ :\ i \in \{1, \ldots ,n\} \} \\
\mathcal{W}_{l-2}^{(2)} &=& W_{l-2} \backslash \{v_{i,l-2}\ :\ i \in \{1, \ldots ,n\} \} \\
\end{eqnarray*}
and the edge set is $\mathcal{E}^{'}_{l-2} = \mathcal{E}_{l-2}^{(1)} \cup \mathcal{E}_{l-2}^{(2)} \cup \mathcal{E}_{l-2}^{(3)}$ where,
\begin{eqnarray*}
\mathcal{E}_{l-2}^{(1)} &=& \{(v_{i,l-2}^{(j_1)},w)\ :\ (v_{i,l-2},w) \in E_{l-2}, j_1 \in \{1,2,3\} \} \\
\mathcal{E}_{l-2}^{(2)} &=& \{(v_{i,l-2}^{(j_1)},v_{i,l-2}^{(j_2)})\ :\ j_1,j_2 \in \{1,2,3\} \} \\
\mathcal{E}_{l-2}^{(3)} &=& E_{l-2} \backslash \{(v_{i,l-2},w)\ :\ w \in W_{l-2} \} \\
\end{eqnarray*}

\begin{figure}[h!]
\begin{center}
\includegraphics[scale=0.5]{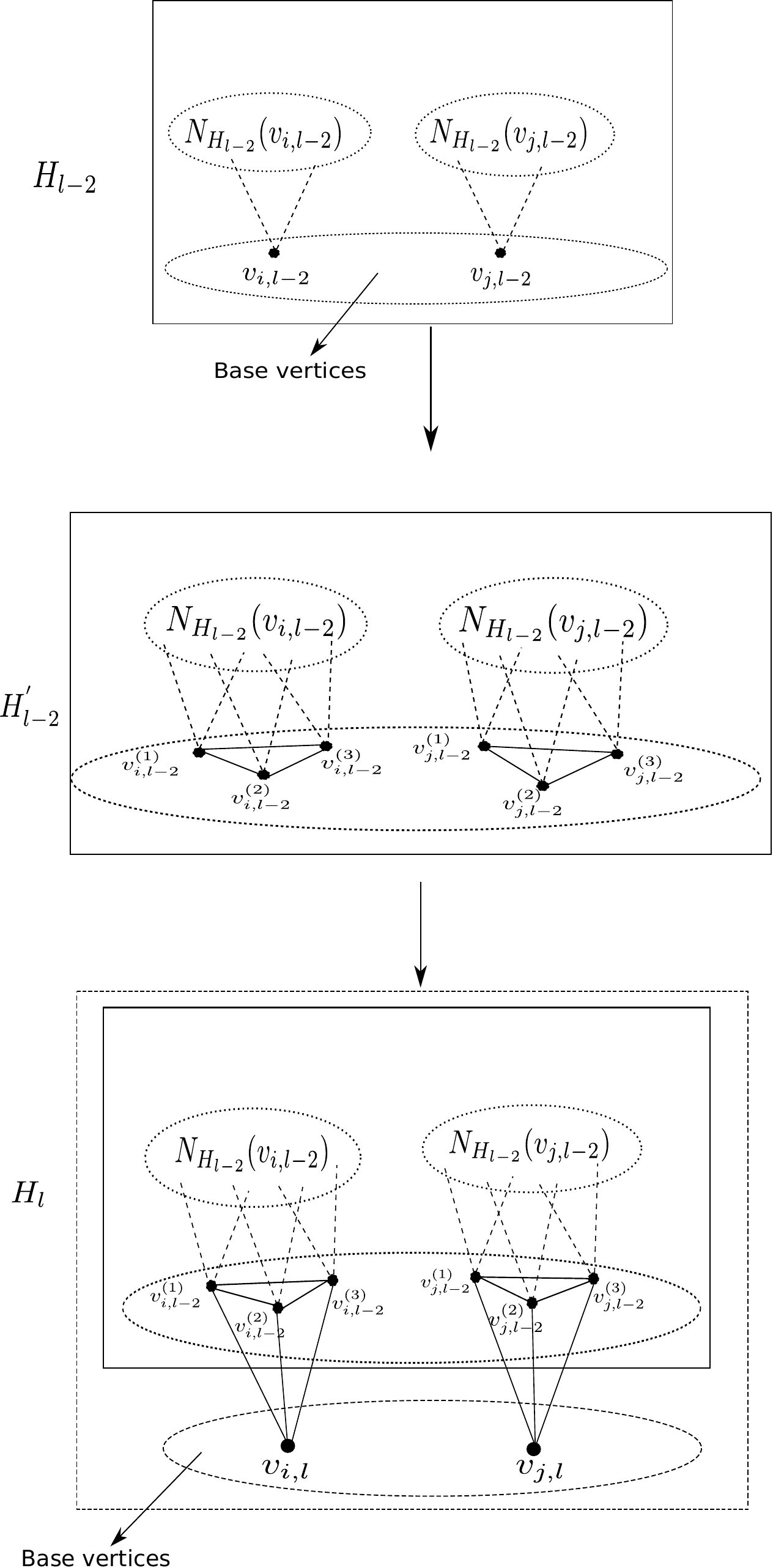}
\end{center}
\caption{Construction of $H_{l}$.}
\end{figure}

\noindent The graph $H_l=(W_l,E_l)$ where $W_{l}=\mathcal{W}^{'}_{l-2} \cup V_l$ and\\ $E_{l}=\mathcal{E}^{'}_{l-2} \cup E'$ such that
\begin{eqnarray*}
V_l &=& \{ v_{1,l}, \ldots ,v_{n,l}\} \\
E' &=& \{ (v_{i,l-2}^{(1)},v_{i,l}),(v_{i,l-2}^{(2)},v_{i,l}),(v_{i,l-2}^{(3)},v_{i,l})\ :\ 1 \leq i \leq n\}
\end{eqnarray*}

The vertices in $V_l$ are the base vertices of $H_l$. 
We now prove that the graph $H_l$ satisfies the  two properties 
stated at the beginning of the section. Before that, 
we define $P_l$ to be the set $\{(v_{i_1,l},v_{i_2,l}) : (v_{i_1},v_{i_2}) \in P\}$ 
which is a set of pairs of vertices in $H_l$. 

\begin{lemma}
\label{pairspath}
If $(v_{i_1,l},v_{i_2,l}) \in P_l$ then there is no path between $v_{i_1,l}$ and $v_{i_2,l}$ of length less than $l$ in $G'$.
\end{lemma}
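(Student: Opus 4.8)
The plan is an induction on $l$ in steps of two, the base cases $l=2$ and $l=3$ being exactly the two facts recorded right after the constructions of $H_2$ and $H_3$ (which in fact give the stronger assertion that there is no path of length $\le l$ between the two base vertices, a fortiori none of length $<l$). Although the lemma is phrased for $G'$, the content is really about $H_l$, and that is where I would prove it; concretely, the quantity I would track is, for any two distinct base vertices $v_{i_1,l},v_{i_2,l}$ of $H_l$,
\[
d_{H_l}(v_{i_1,l},v_{i_2,l})\ \ge\ d_{H_{l-2}}(v_{i_1,l-2},v_{i_2,l-2})+2 .
\]
Given this inequality, if $(v_{i_1,l},v_{i_2,l})\in P_l$ then by definition $(v_{i_1,l-2},v_{i_2,l-2})\in P_{l-2}$, the induction hypothesis gives $d_{H_{l-2}}(v_{i_1,l-2},v_{i_2,l-2})\ge l-2$, and hence $d_{H_l}(v_{i_1,l},v_{i_2,l})\ge l$, which is the claim.

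To prove the inequality I would introduce the projection $\pi\colon W_l\to W_{l-2}$ that reverses, at the level of vertices, the two operations performed in the inductive step: $\pi$ sends each of the three triangle copies $v_{i,l-2}^{(1)},v_{i,l-2}^{(2)},v_{i,l-2}^{(3)}$ and also the new base vertex $v_{i,l}$ to $v_{i,l-2}$, and fixes every vertex of $\mathcal{W}_{l-2}^{(2)}=W_{l-2}\setminus\{v_{i,l-2}:i\in\{1,\ldots,n\}\}$. Running through the four edge classes $\mathcal{E}_{l-2}^{(1)},\mathcal{E}_{l-2}^{(2)},\mathcal{E}_{l-2}^{(3)},E'$ one checks that $\pi$ sends every edge of $H_l$ either to an edge of $H_{l-2}$ or to a loop, the loops arising precisely from the triangle edges $\mathcal{E}_{l-2}^{(2)}$ and from the new base edges $E'$; in particular an edge of $\mathcal{E}_{l-2}^{(1)}$, written as $(v_{i,l-2}^{(j)},w)$ with $(v_{i,l-2},w)\in E_{l-2}$, goes to a genuine edge of $H_{l-2}$. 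Consequently, for any path $p$ in $H_l$ from $v_{i_1,l}$ to $v_{i_2,l}$, applying $\pi$ coordinate-wise and contracting repeated consecutive vertices yields a walk in $H_{l-2}$ from $v_{i_1,l-2}$ to $v_{i_2,l-2}$ whose length is exactly the number of non-loop steps of $p$; hence $p$ has at least $d_{H_{l-2}}(v_{i_1,l-2},v_{i_2,l-2})$ non-loop steps.

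It remains to find two loop steps disjoint from these. Since $E_l=\mathcal{E}'_{l-2}\cup E'$, the base-vertex set $V_l$ of $H_l$ is independent and $N_{H_l}(v_{i,l})=\{v_{i,l-2}^{(1)},v_{i,l-2}^{(2)},v_{i,l-2}^{(3)}\}$; and for $i_1\ne i_2$ these two neighbourhoods are disjoint, so $p$ has length at least $2$. Its first edge is incident to $v_{i_1,l}$ and its last edge is incident to $v_{i_2,l}$, so both belong to $E'$ and are therefore loop steps under $\pi$; being the extreme edges of a path of length $\ge 2$ they occur at distinct positions. Thus $\mathrm{len}(p)\ge d_{H_{l-2}}(v_{i_1,l-2},v_{i_2,l-2})+2$, which is the displayed inequality, and the induction closes.

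The only point needing genuine care, and the one I would write out in full, is the edge-by-edge verification that $\pi$ carries edges to edges-or-loops — this is exactly where the precise form of $\mathcal{E}_{l-2}^{(1)}$ and the fact that the base vertices of $H_{l-2}$ were blown up into triangles while retaining all their incident edges get used — together with the short bookkeeping that merges "$\ge d_{H_{l-2}}(\cdot,\cdot)$ non-loop steps" with "$\ge 2$ loop steps". Everything else — the two base cases, the independence of $V_l$, and the disjointness of $N(v_{i_1,l})$ and $N(v_{i_2,l})$ — is immediate from the definitions.
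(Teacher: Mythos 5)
Your proposal is correct and follows essentially the same route as the paper: induction in steps of two with $H_2,H_3$ as base cases, the observation that the split into triangles does not shorten any path, and the fact that $v_{i,l}$ is attached only to the three copies of $v_{i,l-2}$, giving the $+2$ in the distance. The only difference is presentational: where the paper merely asserts that $\mathcal{H}'_{l-2}$ does not shorten paths, you make this precise via the contraction map $\pi$, which is a welcome tightening of the same argument.
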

\begin{proof}
It was shown that the assertion was true for $H_2$ and $H_3$. 
Assume that the assertion is true for $H_{l-2}$. 
Let $(v_{i_1},v_{i_2}) \in P$. Then, $(v_{i_1,l-2},v_{i_2,l-2}) \in P_{l-2}$ 
and hence by induction, every path from $v_{i_1,l-2}$ to $v_{i_2,l-2}$ in $H_{l-2}$
is of length at most $l-2$. It can be seen that 
by the construction of $\mathcal{H}^{'}_{l-2}$ that we are 
not shortening the paths between any two vertices and all 
paths from $v_{i_1,l-2}^{(j_1)}$ to $v_{i_2,l-2}^{(j_2)}$ will still be of length at most 
$l-2$ for $j_1,j_2 \in \{1,2,3\}$. Consider the graph $H_l$. 
Since, $N_{H_l}(v_{i_1,l})=\{v_{i_1,l-2}^{(1)},v_{i_1,l-2}^{(2)},v_{i_1,l-2}^{(3)}\}$ 
and $N_{H_l}(v_{i_2,l})=\{v_{i_2,l-2}^{(1)},v_{i_2,l-2}^{(2)},v_{i_2,l-2}^{(3)}\}$ 
(here, $N_G(v)$ denotes the neighbourhood of vertex $v$ in graph $G$), 
there cannot be path of length less than or equal to $l$ between $v_{i_1,l}$ and $v_{i_2,l}$ in $H_l$. 
\qed
\end{proof}

\begin{lemma}
\label{nonpairspath}
If $(v_{i_1,l},v_{i_2,l}) \notin P_l$ then the shortest path between $v_{i_1,l}$ and $v_{i_2,l}$ is of length $l$. 
\end{lemma}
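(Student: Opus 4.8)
The plan is to prove Lemma~\ref{nonpairspath} by induction on $l$, mirroring the structure of the proof of Lemma~\ref{pairspath}. The base cases $l = 2$ and $l = 3$ have already been asserted in the constructions of $H_2$ and $H_3$, so I would only need to carry out the inductive step: assuming the statement holds for $H_{l-2}$, establish it for $H_l$. Fix a pair $(v_{i_1,l}, v_{i_2,l}) \notin P_l$; then $(v_{i_1}, v_{i_2}) \notin P$, so $(v_{i_1,l-2}, v_{i_2,l-2}) \notin P_{l-2}$ and by the inductive hypothesis the shortest path between them in $H_{l-2}$ has length exactly $l-2$. I would first argue that the splitting operation producing $\mathcal{H}'_{l-2}$ preserves shortest-path distances between the (split images of the) base vertices: replacing $v_{i,l-2}$ by a triangle on $v_{i,l-2}^{(1)}, v_{i,l-2}^{(2)}, v_{i,l-2}^{(3)}$ and attaching every former neighbor $w$ to all three copies means any path through $v_{i,l-2}$ lifts to a path of the same length through one of the copies (and conversely, going ``in one copy and out another'' via a triangle edge only lengthens the path). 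Hence the distance between $v_{i_1,l-2}^{(j_1)}$ and $v_{i_2,l-2}^{(j_2)}$ in $\mathcal{H}'_{l-2}$ is still exactly $l-2$ for every $j_1, j_2 \in \{1,2,3\}$.

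The second step is to pass from $\mathcal{H}'_{l-2}$ to $H_l$: each base vertex $v_{i,l}$ of $H_l$ has neighborhood exactly $\{v_{i,l-2}^{(1)}, v_{i,l-2}^{(2)}, v_{i,l-2}^{(3)}\}$, and these are the only new edges. For the upper bound, I would exhibit a path of length $l$: take the shortest path of length $l-2$ from $v_{i_1,l-2}^{(j_1)}$ to $v_{i_2,l-2}^{(j_2)}$ in $\mathcal{H}'_{l-2}$ and prepend the edge $(v_{i_1,l}, v_{i_1,l-2}^{(j_1)})$ and append $(v_{i_2,l-2}^{(j_2)}, v_{i_2,l})$, giving a walk of length $l$; I need to check it is a genuine path, which it is as long as $v_{i_1,l}$ and $v_{i_2,l}$ do not appear internally — they cannot, since their only neighbors are the three copies of $v_{i_1,l-2}$ resp. $v_{i_2,l-2}$, and one can choose the lifted path so that it does not revisit those copies (any shortest path visits each copy-vertex at most once). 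For the lower bound, observe that any path from $v_{i_1,l}$ to $v_{i_2,l}$ must begin with an edge into some $v_{i_1,l-2}^{(j_1)}$ and end with an edge out of some $v_{i_2,l-2}^{(j_2)}$; deleting these two endpoints leaves a path in $\mathcal{H}'_{l-2}$ (or, if it passes through other base vertices of $H_l$, it only gets longer, so we may as well assume it does not) between two copy-vertices of $v_{i_1,l-2}$ and $v_{i_2,l-2}$, which by the previous step has length at least $l-2$; hence the original path has length at least $l$.

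The main obstacle I anticipate is the careful bookkeeping in the lower-bound direction: a path in $H_l$ between the two base vertices need not stay inside a single copy $\mathcal{H}'_{l-2}$ of the graph in a clean way — it could in principle wander through other base vertices $v_{m,l}$ for $m \neq i_1, i_2$, each of which is again only connected to three copy-vertices. I would handle this by a minimality argument: take a shortest $v_{i_1,l}$–$v_{i_2,l}$ path $Q$; if $Q$ passed through some base vertex $v_{m,l}$, then $Q$ enters and leaves $v_{m,l}$ through two copies of $v_{m,l-2}$, and one can short-circuit through the triangle on those copies to obtain a path in $\mathcal{H}'_{l-2}$ that is no longer, contradicting nothing but letting us assume $Q$ avoids all base vertices except its endpoints. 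Then $Q$ restricted to its interior lies in $\mathcal{H}'_{l-2}$ and connects a copy of $v_{i_1,l-2}$ to a copy of $v_{i_2,l-2}$, and the inductive distance bound finishes it. Combining the two bounds gives that the shortest path has length exactly $l$, completing the induction.
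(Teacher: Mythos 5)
Your proof is correct and follows essentially the same route as the paper's: induction on $l$, observing that the splitting operation producing $\mathcal{H}'_{l-2}$ preserves distances between the copies of base vertices, and then using that the only neighbors of $v_{i_1,l}$ and $v_{i_2,l}$ in $H_l$ are the three copies of $v_{i_1,l-2}$ and $v_{i_2,l-2}$ to get the bound $2+(l-2)=l$. Your explicit short-circuit argument for shortest paths that detour through other base vertices $v_{m,l}$ fills in a detail the paper's proof leaves implicit, and you handle it correctly.
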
 
\begin{proof}
It was shown that the assertion was true for $H_2$ and $H_3$. Assume that the assertion is true for $H_{l-2}$. Let $v_{i_1},v_{i_2} \notin P$. Then, $(v_{i_1,l-2},v_{i_2,l-2}) \in P_{l-2}$ and hence by induction, the shortest path from $v_{i_1,l-2}$ to $v_{i_2,l-2}$ is of length $l-2$. The construction of $\mathcal{H}^{'}_{l-2}$ does not shorten any path between any two vertices and hence the shortest path from $v_{i_1,l-2}^{(j_1)}$ to $v_{i_2,l-2}^{(j_2)}$ will be of length exactly 
$l-2$ for $j_1,j_2 \in \{1,2,3\}$. Now consider the graph $H_l$. 
Since, $N_{H_l}(v_{i_1,l})=\{v_{i_1,l-2}^{(1)},v_{i_1,l-2}^{(2)},v_{i_1,l-2}^{(3)}\}$ 
and $N_{H_l}(v_{i_2,l})=\{v_{i_2,l-2}^{(1)},v_{i_2,l-2}^{(2)},v_{i_2,l-2}^{(3)}\}$, the shortest 
path $v_{i_1,l}$ and $v_{i_2,l}$ in $H_l$ is of length $l$.
\qed
\end{proof}

\begin{lemma}
\label{nonpairscolor}
There exists an edge coloring of $H_l$ with $l$ colors such that all pairs in $(W_l \times W_l) \setminus P_l$ are rainbow connected.
\end{lemma}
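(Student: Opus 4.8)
The plan is to argue by induction on $l$ in steps of two, the cases $l=2$ and $l=3$ being the explicit colorings already exhibited. So fix $l\ge 4$ and assume, as inductive hypothesis, a coloring $\chi_{l-2}\colon E_{l-2}\to\{c_1,\dots,c_{l-2}\}$ of $H_{l-2}$ under which every pair outside $P_{l-2}$ is rainbow connected; I must build a coloring $\chi_l\colon E_l\to\{c_1,\dots,c_l\}$ with the analogous property for $H_l$.

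The first ingredient is a ``transparency'' observation about the split step that passes from $H_{l-2}$ to $\mathcal{H}'_{l-2}$: since each edge $(v_{i,l-2},w)$ of $H_{l-2}$ is replaced by the three edges $(v_{i,l-2}^{(j)},w)$, any walk in $H_{l-2}$ lifts to a walk of the same length in $\mathcal{H}'_{l-2}$ between arbitrarily prescribed copies of its base endpoints (routing through a single copy at each intermediate base vertex), and no distance shrinks --- this is the same mechanism already used in the proofs of Lemma~\ref{pairspath} and Lemma~\ref{nonpairspath}. Consequently a $\chi_{l-2}$-rainbow path for a pair not in $P_{l-2}$ lifts, with the same length and the same colors $c_1,\dots,c_{l-2}$, to a rainbow path in $\mathcal{H}'_{l-2}$ between any chosen copies of its endpoints; the same holds for pairs of vertices of $\mathcal{W}_{l-2}^{(2)}$.

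The coloring $\chi_l$ I would use keeps $\chi_{l-2}$ on the edges of $\mathcal{E}_{l-2}^{(3)}$; colors the triangle edges $\mathcal{E}_{l-2}^{(2)}$ and the spoke edges $E'$ with the two fresh colors so that from every base vertex $v_{i,l}$, and inside every triangle, both $c_{l-1}$ and $c_l$ are reachable; and, for each edge $(v_{i,l-2},w)$ of $H_{l-2}$, leaves one of its three lifted copies in $\mathcal{E}_{l-2}^{(1)}$ at the old color $\chi_{l-2}(v_{i,l-2},w)$ while recoloring the other two copies $c_{l-1}$ and $c_l$ respectively. The verification then proceeds by cases on the pair $(x,y)\in(W_l\times W_l)\setminus P_l$. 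If $x,y$ both lie in $\mathcal{W}_{l-2}^{(2)}$, or are split copies $v_{i_1,l-2}^{(j_1)},v_{i_2,l-2}^{(j_2)}$ with $(v_{i_1},v_{i_2})\notin P$, then the lift of the $\chi_{l-2}$-rainbow path (through the old-colored copies of the lifted edges) works. If exactly one of them is a base vertex $v_{i,l}$ and the associated cluster pair is not in $P$, one prepends or appends a spoke edge of a suitable fresh color to a $\chi_{l-2}$-rainbow path; for $x=v_{i_1,l}$, $y=v_{i_2,l}$ with $(v_{i_1},v_{i_2})\notin P$ one takes the rainbow path between $v_{i_1,l-2}$ and $v_{i_2,l-2}$ given by the inductive hypothesis --- which by Lemma~\ref{nonpairspath} has length exactly $l-2$ --- lifts it, and wraps it with a spoke of color $c_{l-1}$ on one side and $c_l$ on the other. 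The leftover pairs are $(v_{i_1,l-2}^{(j_1)},v_{i_2,l-2}^{(j_2)})$ and $(v_{i,l},v_{k,l-2}^{(j)})$ whose associated cluster pair \emph{is} in $P$ while the pair itself is not in $P_l$; these are the crux.

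I expect these leftover pairs to be the main obstacle, and handling them is exactly what forces the recoloring of the lifted edges. Such a pair lies at distance $l-1$ (resp.\ $l$), and every path realizing that distance is confined --- save possibly one spoke edge at an end --- to the inherited edges $\mathcal{E}_{l-2}^{(1)}\cup\mathcal{E}_{l-2}^{(3)}$, because a triangle edge or a hop through some $v_{k,l}$ only re-enters a cluster already visited and strictly lengthens the path. Hence under a coloring that merely inherits $\chi_{l-2}$ on the lifted structure, such a pair would have \emph{no} rainbow path at all: too few colors for the shortest path, and no slack for a longer one. The fix is to pick, for the $P$-pair $(v_{i_1},v_{i_2})$, a neighbor $y_1$ of $v_{i_1,l-2}$ and a neighbor $y_{l-2}$ of $v_{i_2,l-2}$ in $H_{l-2}$ with $(y_1,y_{l-2})\notin P_{l-2}$, take the $\chi_{l-2}$-rainbow path $\rho$ between them (of length at most $l-2$ by the hypothesis), and lift the concatenation $v_{i_1,l-2}\,y_1\,\rho\,y_{l-2}\,v_{i_2,l-2}$ so that it runs from $v_{i_1,l-2}^{(j_1)}$ to $v_{i_2,l-2}^{(j_2)}$ and uses the $c_{l-1}$-copy of $(v_{i_1,l-2},y_1)$ and the $c_l$-copy of $(y_{l-2},v_{i_2,l-2})$: the body $\rho$ is rainbow in $\{c_1,\dots,c_{l-2}\}$, so the whole path, of length at most $l$, is rainbow. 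The delicate part --- the technical heart of the argument --- is to carry this out \emph{consistently}: to show that the copy demands of the various leftover pairs can all be met simultaneously (using the third, old-colored copy, and at most one triangle edge to redirect to the copy actually needed, without pushing the total length past $l$), that the fresh colors placed on spokes and triangles in the earlier cases never clash with the body of such a path, and that the shortest-path descriptions in Lemmas~\ref{pairspath} and~\ref{nonpairspath} are tight enough to guarantee that no interior vertex of any of these paths is itself a base vertex of $H_l$. Once this bookkeeping is in place, the cases above exhaust $(W_l\times W_l)\setminus P_l$ and the induction closes.
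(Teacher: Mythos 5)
Your overall strategy is the paper's: induct in steps of two, keep the inherited coloring on the deep edges, spend the two fresh colors near the clusters and the new base vertices, and recognize that the dangerous pairs are the split copies (and base-to-copy pairs) whose underlying pair lies in $P$. However, there is a genuine gap: the step you yourself call ``the technical heart'' --- showing that the copy demands of all these pairs can be met simultaneously, with no fresh-color clashes and no path exceeding length $l$ --- is precisely the content of the lemma, and it is exactly what the paper's case analysis (its Cases (i)--(v) and the routing table) carries out. Saying ``once this bookkeeping is in place the induction closes'' leaves the decisive verification unproved.

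Worse, the specific coloring you propose makes that verification fail rather than merely laborious. Keeping the old color on only \emph{one} of the three lifted copies of each edge $(v_{i,l-2},w)$ is too stingy. Already for a pair $(v_{i,l},v_{j,l})\notin P_l$: by Lemma~\ref{nonpairspath} its distance in $H_l$ is exactly $l$, so any rainbow path has length exactly $l$ and must use every color exactly once; its first and last edges are spokes (fresh colors), so all $l-2$ interior edges must carry old colors, forcing entry and exit through the unique old-colored copies, and forcing the two terminal spokes to carry \emph{different} fresh colors. With two fresh colors this is a global proper-2-coloring-type constraint over all pairs outside $P$, which cannot be satisfied in general. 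The same shortage bites in your crux case: if the inherited rainbow path revisits the starting cluster, or if the copy you must start or end at is not the one carrying the fresh color you need, there is no second old copy to reroute through, and a triangle detour either reuses a fresh color or pushes the length past $l$. The paper sidesteps all of this by a sharper design: the old color is kept on \emph{two} copies (so $\chi$-rainbow paths lift through copy $(1)$, with copy $(2)$ as a spare for revisits and for the second endpoint), every copy-$(3)$ lifted edge gets $c_{l-1}$ together with the spoke to copy $(1)$, and every triangle edge gets $c_l$ together with the spokes to copies $(2)$ and $(3)$. Then each case has an explicit routing of length at most $l$; in particular the crux pair $(v_{i,l-2}^{(j_1)},v_{j,l-2}^{(j_2)})$ with $(v_i,v_j)\in P$ is handled by an old-colored lift from $v_{i,l-2}^{(1)}$ to a neighbour $w_2$ of $v_{j,l-2}$ (a pair automatically outside $P_{l-2}$), followed by the $c_{l-1}$ edge into $v_{j,l-2}^{(3)}$ and at most one $c_l$ triangle edge. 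To repair your write-up you would need either to adopt this two-old-copies coloring or to supply the missing consistency argument, which, as shown above, your one-old-copy coloring cannot support.
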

\begin{proof}
It was shown that the assertion is true for $H_2,H_3$. Assume that the assertion is true for $H_{l-2}$. Let $\chi: E_{l-2} \rightarrow \{c_1, \ldots ,c_{l-2}\}$ be an edge coloring of $H_{l-2}$ such that all pairs in $(W_{l-2} \times W_{l-2}) \backslash P_{l-2}$ are rainbow connected. Using $\chi$, we define an edge coloring $\chi': E_l \rightarrow \{c_1, \ldots ,c_l\}$ of $H_{l}$ as follows.\\
\indent $-$ If $e \in \mathcal{E}_{l-2}^{(1)}$, then if $e=(v_{i,l-2}^{(1)},w)$ or $e=(v_{i,l-2}^{(2)},w)$ then $\chi'(e)=\chi((v_{i,l-2},w))$ or else if $e=(v_{i,l-2}^{(3)},w)$ then $\chi'(e)=c_{l-1}$.\\
\indent $-$ If $e \in \mathcal{E}_{l-2}^{(2)}$, then $\chi'(e)=c_{l}$.\\
\indent $-$ If $e \in \mathcal{E}_{l-2}^{(3)}$, then $\chi'(e)=\chi(e)$.\\
\indent $-$ If $e \in E'$, then if $e=(v_{i,l-2}^{(1)},v_{i,l})$ then $\chi'(e)=c_{l-1}$ or else if $e=(v_{i,l-2}^{(2)},v_{i,l})$ or $e=(v_{i,l}^{(3)},v_{i,l})$ then $\chi'(e)=c_{l}$.\\

\begin{figure}[h!]
\begin{center}
\includegraphics[scale=0.5]{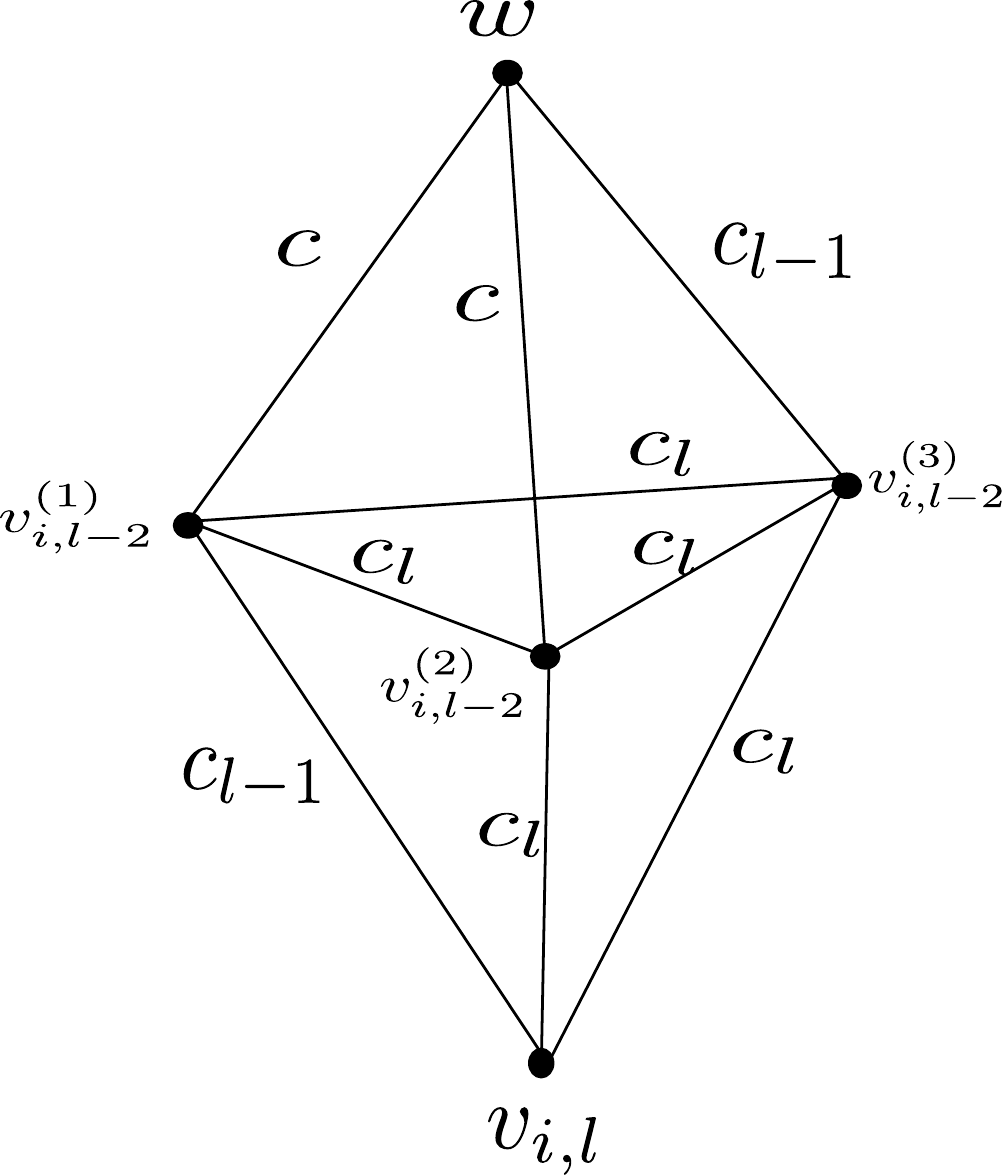}
\end{center}
\caption{Edge coloring of $H_l$: $c$ is the color originally used to color the edge $(v_{i,l-2},w)$.} 
\end{figure}

We now show that the $l$-edge coloring $\chi'$ makes $H_l$ rainbow connected. \\

\noindent \textbf{Claim.} Let $v,w \in W_{l}$. If $(v,w) \notin P_l$, then $v$ and $w$ are rainbow connected in $H_{l}$.\\
\textit{Proof.} We will consider the following cases.

\noindent \textbf{Case (i)} $v,w \in \mathcal{W}_{l-2}^{(2)}$: Since $\mathcal{W}_{l-2}^{(2)} \subset W_{l-2}$, there is a rainbow path $\mathcal{P}$ from $v$ to $w$ in $W_{l-2}$. If $\mathcal{P}$ consisted of edges only from $\mathcal{E}_{l-2}^{(3)}$ then these edges are also in $E_{l-2}$ and by our coloring scheme, the rainbow path $\mathcal{P}$ is also present in $H_l$. So assume that $\mathcal{P}$ contains base vertices \textit{i.e.,} let $\mathcal{P}=v \rightsquigarrow v_{i_1,l-2} \cdots \rightsquigarrow v_{i_s,l-2} \rightsquigarrow w$ where the distinct $i_1, \ldots ,i_s$ belong to $\{1, \ldots ,n\}$. It can be observed that the path $\mathcal{P}'=v \rightsquigarrow v_{i_1,l-2}^{(1)} \cdots \rightsquigarrow v_{i_s,l-2}^{(1)} \rightsquigarrow w$ is rainbow path from $v$ to $w$ in $W_{l-2}$.  

\noindent \textbf{Case (ii)} $v \in \mathcal{W}_{l-2}^{(1)}, w \in \mathcal{W}_{l-2}^{(2)}$: Let $v=v_{i,l-2}^{(j_1)}$ for $i \in \{1, \ldots ,n\},j_1 \in \{1,2,3\}$. There is a rainbow path in $H_{l-2}$ from $v_{i,l-2}$ to $w$. From an argument similar to the previous case, there exists a corresponding rainbow path from $v=v_{i,l-2}^{(1)}$ to $w$ denoted by $v_{i,l-2}^{(1)} \overset{R_{H_l}}{\rightsquigarrow} w$ which does not use the colors $c_{l-1}$ and $c_l$. Similarly there exists a rainbow path from $v_{i,l-2}^{(2)}$ to $w$. Also, the path $v_{i,l-2}^{(3)} - v_{i,l-2}^{(1)} \overset{R_{H_l}}{\rightsquigarrow} w$ from $v_{i,l-2}^{(3)}$ to $w$ is a rainbow path.     

\noindent \textbf{Case (iii)} $v,w \in \mathcal{W}_{l-2}^{(1)}$: Let $v=v_{(i_1,l-2)}^{(j_1)}$, $w=v_{(i_2,l-2)}^{(j_2)}$ for $j_1,j_2 \in \{1,2,3\}$. And let $w_1 \in N_{H_{l-2}}(v_{i,l-2})$, $w_2 \in N_{H_{l-2}}(v_{j,l-2})$. From Case (ii), there are rainbow paths from $w_1$ to $v_{j,l-2}^{(1)}$ (denoted by $w_1 \overset{R_G}{\rightsquigarrow} v_{j,l-2}^{(1)}$) and from $w_2$ to $v_{i,l-2}^{(1)}$ (denoted by $w_2 \overset{R_G}{\rightsquigarrow} v_{i,l-2}^{(2)}$). The following table shows the rainbow paths for all possible cases of $v$ and $w$:\\

\begin{center}
\scalebox{0.7}{

\begin{tabular}{ | c || c | c | c |}
\hline
  \  & $v_{j,l-2}^{(1)}$  & $v_{j,l-2}^{(2)}$ & $v_{j,l-2}^{(3)}$  \\ \hline \hline
   $v_{i,l-2}^{(1)}$ & $v_{i,l-2}^{(1)} \overset{R_{H_l}}{\rightsquigarrow} w_2 - v_{j,l-2}^{(3)} - v_{j,l-2}^{(1)}$ &  $v_{i,l-2}^{(1)} \overset{R_{H_l}}{\rightsquigarrow} w_2 - v_{j,l-2}^{(3)} - v_{j,l-2}^{(2)}$ & $v_{i,l-2}^{(1)} \overset{R_{H_l}}{\rightsquigarrow} w_2 - v_{j,l-2}^{(3)}$ \\ \hline
     $v_{i,l-2}^{(2)}$ & $-$  & $v_{i,l-2}^{(2)} \overset{R_{H_l}}{\rightsquigarrow} w_2 - v_{j,l-2}^{(3)} - v_{j,l-2}^{(2)}$ & $v_{i,l-2}^{(2)} \overset{R_{H_l}}{\rightsquigarrow} w_2 - v_{j,l-2}^{(3)}$  \\ \hline
       $v_{i,l-2}^{(3)}$ & $-$  & $-$ & $v_{i,l-2}^{(3)} - w_1 \overset{R_{H_l}}{\rightsquigarrow} v_{i,l-2}^{(1)} - v_{j,l-2}^{(3)} $  \\
       \hline
       \end{tabular}
}
\end{center}

\ \\
\ \\
       \noindent \textbf{Case (iv)} $v,w \in V_l$ and $(v,w) \notin P_l$: Let $v=v_{i,l},w=v_{i,l}$. Since $(v_{i,l},v_{j,l}) \notin P_l$, this means that $(v_{i,l-2},v_{j,l-2}) \notin P_{l-2}$. Hence, there exists a rainbow path from $v_{i,l-2}$ to $v_{j,l-2}$ in $H_{l-2}$. Correspondingly in the graph $\mathcal{H}'_{l-2}$, there exists a rainbow path from $v_{i,l-2}^{(1)}$ to $v_{j,l-2}^{(2)}$ denoted by $v_{i,l-2}^{(1)} \rightsquigarrow v_{j,l-2}^{(2)}$. It can be observed that the edge colored path $v_{i,l-2} - v_{i,l-2}^{(1)} \rightsquigarrow v_{j,l-2}^{(2)} - v_{j,l-2}$ is a rainbow path. 

       \noindent \textbf{Case (v)} $v \in W_{l-2}, w \in V_l$: Let $w=v_{i,l} \in V_l$. There exists a rainbow path from $v_{i,l-2}$ to $v$ in $H_{l-2}$ . Correspondingly there exists a rainbow path from $v_{i,l-2}^{(1)}$ to $v$ in $H_l$, denoted by $v_{i,l-2}^{(1)} \rightsquigarrow v$, which does not use colors $c_{l-1}$ and $c_l$. Hence, the path $v_{i,l} - v_{i,l-2}^{(1)} \rightsquigarrow v$ is a rainbow path in $H_l$.
\qed
       \end{proof}


\subsubsection{Reduction:}
\noindent  We have gathered all the tools to formulate the reduction. Let the instance for $k$-subset rainbow connectivity be $\langle G=(V,E),P \rangle$. We construct a graph $\langle G'\rangle$ as an instance for $k$-rainbow connectivity as follows:\\
       1) Construct $H_k=(V_k,E_k)$.\\
       2) Let $V'=W_k,\ E'=E_k \cup \{(v_{i,k},v_{j,k})\ :\ (v_i,v_j) \in E\}$.
\par $G'=(V',E')$ is the required graph. We call the induced subgraph of $G'$ containing the base vertices as base graph $G_k$. It can be seen that $G_k$ is isomorphic to $G$.
\par Let $(v_{i,k},v_{j,k}) \in P$. Consider a path $\mathcal{P}=v_{i,k}-x_1 \rightsquigarrow x_s-v_{j,k}$ between $v_{i,k}$ and $v_{j,k}$ of length $k$ or less. All the edges in $\mathcal{P}$ cannot be in $H_k$ from Lemma~\ref{pairspath}. Hence $\mathcal{P}$ contains at least one edge from $G_k$. Consider a subpath $\mathcal{P}'=x \rightsquigarrow y$ of $\mathcal{P}$ such that $x,y$ are base vertices and all the edges in $\mathcal{P}'$ are in $H_k$. If $\mathcal{P}' \neq \emptyset$ then $\mathcal{P} \neq \mathcal{P}'$ since $\mathcal{P}$ contains at least one edge from $G_k$. 
From Lemma~\ref{pairspath} and Lemma~\ref{nonpairspath}, $\mathcal{P}'$ is of length 
at least $k$. Hence, length of $\mathcal{P}$ is definitely more than $k$,
 contradicting the fact that length of $\mathcal{P}$ is $k$ or less.  Hence $\mathcal{P}'=\emptyset$ and all the edges in $\mathcal{P}$ are in $G_k$. Thus, any path of length $k$ or less between any pair of vertices in $P$ is entirely contained in $G_k$ itself.
\par We now prove the main theorem of this section which proves that $k$-rainbow connectivity problem is NP-hard. 

\begin{lemma}
$G$ is $k$-subset rainbow connected iff $G'$ is $k$-rainbow connected.
\end{lemma}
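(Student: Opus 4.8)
The plan is to prove both directions by transferring colorings between $G'$ and its two edge-disjoint pieces: the copy $G_k$ of $G$ sitting on the base vertices, and the gadget graph $H_k$. Throughout I use that a rainbow path under a $k$-edge-coloring has at most $k$ edges, together with the observation already established in the paragraph preceding this lemma: every path of length at most $k$ between a pair $(v_{i,k},v_{j,k})$ with $(v_i,v_j)\in P$ lies entirely inside $G_k$ (this is what Lemmas~\ref{pairspath} and \ref{nonpairspath} were set up to deliver).

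For the forward direction, suppose $G$ is $k$-subset rainbow connected, witnessed by an edge coloring $\chi_G\colon E\to\{c_1,\dots,c_k\}$ under which every pair in $P$ has a rainbow path in $G$. I would define a coloring $\chi'$ of $G'$ by coloring the edges of $G_k$ according to $\chi_G$ (via the isomorphism $G_k\cong G$) and coloring the edges of $H_k$ by the coloring supplied by Lemma~\ref{nonpairscolor}, which uses only $k$ colors and makes every pair in $(W_k\times W_k)\setminus P_k$ rainbow connected. Then, for any pair of vertices in $V'=W_k$: if it is not in $P_k$, it already has a rainbow path inside $H_k$, and adding the $G_k$ edges cannot destroy it; if it is a pair $(v_{i,k},v_{j,k})\in P_k$, then $(v_i,v_j)\in P$, so $\chi_G$ gives a rainbow path between $v_i$ and $v_j$ in $G$, whose image under the isomorphism is a rainbow path between $v_{i,k}$ and $v_{j,k}$ inside $G_k\subseteq G'$. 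Hence $\chi'$ is a rainbow coloring of $G'$ with at most $k$ colors.

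For the reverse direction, suppose $G'$ is $k$-rainbow connected, witnessed by $\chi'\colon E'\to\{c_1,\dots,c_k\}$. I would let $\chi_G$ be the restriction of $\chi'$ to the edges of $G_k$, read off as a coloring of $E$ through the isomorphism $G_k\cong G$. For any pair $(v_i,v_j)\in P$, the pair $(v_{i,k},v_{j,k})$ has a $\chi'$-rainbow path $\mathcal{P}$ in $G'$; since $\chi'$ uses $k$ colors, $\mathcal{P}$ has length at most $k$, so by the observation above $\mathcal{P}$ lies entirely in $G_k$. Transporting $\mathcal{P}$ back through the isomorphism yields a $\chi_G$-rainbow path between $v_i$ and $v_j$ in $G$. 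Thus $G$ is $k$-subset rainbow connected.

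The routine parts are the two transfers of colorings; the only real content is already isolated in Lemmas~\ref{pairspath}, \ref{nonpairspath}, \ref{nonpairscolor} and in the paragraph confining short $P$-paths to $G_k$. The step I would be most careful about is the forward direction: one must check that overlaying the $G_k$-coloring on top of the $H_k$-coloring does not interfere with the rainbow paths guaranteed by Lemma~\ref{nonpairscolor} — this is immediate since those paths use only $H_k$-edges, which we never recolor — and that both colorings draw from the common palette $\{c_1,\dots,c_k\}$, so that the total number of colors used on $G'$ stays at $k$.
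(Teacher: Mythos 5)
Your proposal is correct and follows essentially the same route as the paper: one direction overlays the $G$-coloring on $G_k$ with the $H_k$-coloring from Lemma~\ref{nonpairscolor}, and the other restricts the coloring of $G'$ to $G_k$ and uses the confinement of all length-$\le k$ paths between $P$-pairs to $G_k$. Your explicit remark that a rainbow path under a $k$-coloring has at most $k$ edges is exactly the (implicit) step the paper relies on, so there is nothing substantive to add.
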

\begin{proof}
       ``If": Let $\chi:G' \rightarrow \{c_1, \ldots ,c_k\}$ be an edge coloring of $G'$ with $k$ colors such that every pair of vertices in $G'$ have a rainbow path between them. We define $\chi'$ to be a $k$-edge coloring of $G$ as follows:  $\chi'(v_i,v_j)=\chi((v_{i,k},v_{j,k}))$ if the edge $(v_{i,k},v_{j,k}) \in G_k$. By the observations made prior to this lemma, all paths between $v_{i,k},v_{j,k}$ of length $k$ or less are in $G_k$ itself. Hence the entire rainbow path between $v_{i,k}$ and $v_{j,k}$ must lie in $G_k$ itself. Correspondingly, there is a rainbow path between $v_i$ and $v_j$ in $G$. \\
       ``Only if": If $G$ is $k$-subset rainbow connected then let $\chi$ be a edge coloring of $G$ such that $\chi:E \rightarrow \{c_1, \ldots c_k\}$ makes it subset rainbow connected. Also, let $\chi':W_k \rightarrow \{c_1, \ldots c_k\}$ be an edge coloring of $H_k$ such that all pairs of vertices in $(W_k \times W_k) \setminus P_k$ are rainbow connected. We now define an edge coloring $\chi'':E' \rightarrow \{c_1, \ldots ,c_k\}$ in $G'$ using $\chi$ and $\chi'$ as follows. Let $e \in G'$. If $e \in H_k$ then $\chi''(e)=\chi'(e)$ else $\chi''(e)=\chi(e)$. The rainbow paths between every pair of vertices in $P_k$ will be in $G_k$ itself. From Lemma \ref{nonpairscolor}, all the pairs of vertices in $V' \backslash P_k$ will be rainbow connected. Since there is a rainbow path between every pair of vertices in $G'$, the graph $G'$ is rainbow connected.
\qed
       \end{proof}

\noindent Therefore, we now conclude the following theorem.

\begin{theorem}
For every $k \geq 3$, deciding whether $rc(G) \leq 3$ is NP-hard.
\end{theorem}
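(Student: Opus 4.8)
The plan is to obtain the theorem immediately from the reduction constructed in this section together with Lemma~\ref{lem:hardness-src}; all the substantive work has already been done in building the family $\{H_k\}$ and in Lemmas~\ref{pairspath}--\ref{nonpairscolor}. Fix $k \ge 3$. Given an instance $\langle G=(V,E), P\rangle$ of $k$-subset rainbow connectivity, one forms $G'$ exactly as in the Reduction paragraph: construct $H_k$ by starting from the explicit base graph $H_2$ (when $k$ is even) or $H_3$ (when $k$ is odd) and applying the inductive step $\lfloor (k-2)/2\rfloor$ times, then add to $H_k$ the base edges $\{(v_{i,k},v_{j,k}) : (v_i,v_j)\in E\}$ to get $G'=(V',E')$.

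First I would verify that $\langle G,P\rangle \mapsto G'$ is a polynomial-time reduction for each fixed $k$. The base graph $H_2$ has $O(n^2)$ vertices and $O(n^4)$ edges, where $n=|V|$; each round of the inductive step replaces the $n$ current base vertices by $3n$ split vertices, adds $n$ fresh base vertices, and at most triples the edge count while adding $O(n)$ new edges. Since the number of rounds is a constant depending only on $k$, both $|V'|$ and $|E'|$ are polynomial in $n$ and $G'$ is computable in polynomial time. By Lemma~\ref{lem:hardness-src}, $k$-subset rainbow connectivity is NP-hard (indeed already when the input graph is a star), so this is a reduction from an NP-hard problem.

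Next I would invoke correctness, which is precisely the last lemma: $G$ is $k$-subset rainbow connected if and only if $G'$ is $k$-rainbow connected, i.e.\ if and only if $rc(G') \le k$. That equivalence rests on the three structural facts already proved for $H_k$: Lemma~\ref{pairspath} (a pair of $P_k$ has no path of length $<k$ using only edges of $H_k$), Lemma~\ref{nonpairspath} (a non-pair has shortest path of length exactly $k$), and Lemma~\ref{nonpairscolor} ($H_k$ admits a $k$-edge-coloring that rainbow-connects every non-pair). Chaining this with the previous paragraph, deciding $rc(G')\le k$ is NP-hard; in particular, taking $k=3$, deciding $rc(G)\le 3$ is NP-hard, and the same argument gives NP-hardness of deciding $rc(G)\le k$ for every fixed $k\ge 3$.

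The main point that still needs care---rather than a genuine obstacle---is the bookkeeping in the first step: the edge count of $H_k$ can grow like $3^{\Theta(k)}$ times $|E(H_2)|$, which is harmless because $k$ is a fixed constant, but one should flag that the reduction is polynomial only for each fixed $k$, not uniformly in $k$. The other thing to be precise about is that we reduce from the \emph{rainbow} (not the geodesic) variant of $k$-subset connectivity: it is this variant that Lemma~\ref{lem:hardness-src} provides, and the role of Lemmas~\ref{pairspath} and \ref{nonpairspath} is exactly to guarantee that in $G'$ no pair of $P_k$ can be rainbow-connected by a ``detour'' through $H_k$, so that the correctness of the reduction does not secretly rely on shortest paths.
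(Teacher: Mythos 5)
Your proposal is correct and follows essentially the same route as the paper: conclude the theorem by combining the NP-hardness of $k$-subset rainbow connectivity (Lemma~\ref{lem:hardness-src}) with the polynomial-time construction of $G'$ from $H_k$ and the correctness lemma stating that $G$ is $k$-subset rainbow connected iff $rc(G') \le k$, which rests on Lemmas~\ref{pairspath}, \ref{nonpairspath} and \ref{nonpairscolor}. Your added remarks on the size of $H_k$ for fixed $k$ and on using the rainbow (rather than geodesic) variant of the subset problem are exactly the points the paper leaves implicit.
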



\section{Conclusion}
In this paper, we showed that given any natural number $k$ and a graph $G$ it is NP-hard to determine whether $rc(G) \leq k$. We also show given a bipartite graph, the problem of determining whether $src(G) \leq k$ is NP-hard. \\

\section{Acknowledgements}   
We would like to thank Deepak Rajendraprasad and Dr. L. Sunil Chandran for the useful discussions on the topic.

\bibliographystyle{plain}
\bibliography{prabhanjan}
\end{document}